\DeclareMathOperator{\tr}{tr}
\begin{document}
%
% paper title
% Titles are generally capitalized except for words such as a, an, and, as,
% at, but, by, for, in, nor, of, on, or, the, to and up, which are usually
% not capitalized unless they are the first or last word of the title.
% Linebreaks \\ can be used within to get better formatting as desired.
% Do not put math or special symbols in the title.
%\title{A Joint Combiner and Bit Allocation Design for Massive MIMO Using Genetic Algorithm}
\title{Single-User mmWave Massive MIMO: SVD-based ADC Bit Allocation and Combiner Design}

% author names and affiliations
% use a multiple column layout for up to three different
% affiliations
\author{\IEEEauthorblockN{I. Zakir Ahmed  and Hamid Sadjadpour\\}
\IEEEauthorblockA{Department of Electrical Engineering\\
University of California, Santa Cruz\\
}
\and
\IEEEauthorblockN{Shahram Yousefi}
\IEEEauthorblockA{Department of Electrical and Computer Engineering\\
Queen's University, Canada\\
}}
%\author{\IEEEauthorblockN{Author1 and Author2\\}
%\IEEEauthorblockA{University1\\
%}
%\and
%\IEEEauthorblockN{Author3 and Author4\\}
%\IEEEauthorblockA{University2\\
%}}

% conference papers do not typically use \thanks and this command
% is locked out in conference mode. If really needed, such as for
% the acknowledgment of grants, issue a \IEEEoverridecommandlockouts
% after \documentclass

% for over three affiliations, or if they all won't fit within the width
% of the page, use this alternative format:
% 
%\author{\IEEEauthorblockN{Michael Shell\IEEEauthorrefmark{1},
%Homer Simpson\IEEEauthorrefmark{2},
%James Kirk\IEEEauthorrefmark{3}, 
%Montgomery Scott\IEEEauthorrefmark{3} and
%Eldon Tyrell\IEEEauthorrefmark{4}}
%\IEEEauthorblockA{\IEEEauthorrefmark{1}School of Electrical and Computer Engineering\\
%Georgia Institute of Technology,
%Atlanta, Georgia 30332--0250\\ Email: see http://www.michaelshell.org/contact.html}
%\IEEEauthorblockA{\IEEEauthorrefmark{2}Twentieth Century Fox, Springfield, USA\\
%Email: homer@thesimpsons.com}
%\IEEEauthorblockA{\IEEEauthorrefmark{3}Starfleet Academy, San Francisco, California 96678-2391\\
%Telephone: (800) 555--1212, Fax: (888) 555--1212}
%\IEEEauthorblockA{\IEEEauthorrefmark{4}Tyrell Inc., 123 Replicant Street, Los Angeles, California 90210--4321}}

% use for special paper notices
%\IEEEspecialpapernotice{(Invited Paper)}

% make the title area
\maketitle

% As a general rule, do not put math, special symbols or citations
% in the abstract
\begin{abstract}
In this paper, we propose a Singular-Value-Decomposition-based variable-resolution Analog to Digital Converter (ADC) bit allocation design for a single-user Millimeter wave massive Multiple-Input Multiple-Output  receiver. We derive the optimality condition for bit allocation under a power constraint. This condition ensures optimal receiver performance in the Mean Squared Error (MSE) sense. We derive the MSE expression and show that it approaches the Cramer-Rao Lower Bound (CRLB). The CRLB is seen to be a function of the analog combiner, the digital combiner, and the bit allocation matrix. We attempt to minimize the CRLB with respect to the bit allocation matrix by making suitable assumptions regarding the structure of the combiners. In doing so, the bit allocation design reduces to a set of simple inequalities consisting of ADC bits, channel singular values and covariance of the quantization noise along each RF path. This results in a simple and computationally efficient bit allocation algorithm. Using simulations, we show that the MSE performance of our proposed bit allocation is very close to that of the Full Search (FS) bit allocation. We also show that the computational complexity of our proposed method has an order of magnitude improvement compared to FS and Genetic Algorithm based bit allocation of $\cite{Zakir1}$.
\end{abstract}

% no keywords

%\smallskip
% 
%\textbf{Keywords.} Base Station (BS), Access Point (AP), User Equipment (UE)

% For peer review papers, you can put extra information on the cover
% page as needed:
% \ifCLASSOPTIONpeerreview
% \begin{center} \bfseries EDICS Category: 3-BBND \end{center}
% \fi
%
% For peerreview papers, this IEEEtran command inserts a page break and
% creates the second title. It will be ignored for other modes.
\IEEEpeerreviewmaketitle

\section{Introduction}
Hybrid precoding and combining is a common architecture used with Millimeter wave (mmWave) Massive Multiple-Input Multiple-Output (MIMO) transceivers $\cite{hybrid,mmPreCom,SigProc}$. Also, in mmWave massive MIMO receivers, low-resolution combining is a popular technique  to deal with large power demands by the Analog to Digital Converters (ADC) $\cite{mmPreCom,SigProc,5GBackHaul}$. In this paper, we propose to combine the SVD-based combining with low-resolution ADCs at the receiver.

%We use the Additive Quantization Noise Model (AQNM) model to mathematically construct the bit allocation framework $\cite{Uplink}$, $\cite{VarBitAlloc}$, $\cite{Rangan}$.
%
 In an earlier work, it was shown that for certain channel conditions, the optimal bit allocation across the RF paths is non-uniform and  channel-dependent $\cite{Zakir1}$. Hence, we use a generic structure to allow non-uniform bit allocation. Our focus in this work is primarily on the design of ADC bit allocation at the receiver and hybrid combiners via channel SVD. We assume that the hybrid precoder design is independent of combiners and ensure that the optimal hybrid precoders match the right singular matrix of the MIMO channel $\cite{SigProc,PreDsgn}$.

We derive the expression for the MSE of the received, quantized, and combined vector. Our contributions in this paper are: (1) we derive the expression for CRLB and show that the MSE achieves CRLB, and (2) we attempt to minimize the CRLB with respect to the bit allocation matrix. The bit-allocation matrix enable variable bit-allocation on the receivers RF-paths. In doing so, we arrive at the conditions for hybrid combiners and a simple algorithm for bit allocation with order of magnitude reduction in computational complexity. We present the simulation results by computing MSE at various SNRs using the proposed bit allocation and Full Search (FS) technique. We simulate the mmWave channel using the simulator of Shu Sun et. al $\cite{nyusim}$ with 8 and 12 RF Paths. We use 64 Rx and 32 Tx antenna elements arranged in a Uniform Linear Array (ULA).

\section{Signal Model}
The signal model for a typical single-user mmWave Massive MIMO transceiver encompassing a hybrid precoding and combining is shown in Figure \ref{fig:Fig1new.pdf}. Here, we denote ${\bold{F}_D}$ and ${\bold{F}_A}$ to be the digital and analog precoders, respectively. Similarly, we represent  ${\bold{W}_D^H}$ and ${\bold{W}_A^H}$ to be the digital and analog combiners, respectively. The vector $\bold{x}$ is an $Ns\times1$ transmitted signal vector whose average power is unity. Let $N_{rt}$ and $N_{rs}$ denote the number of RF Chains at the transmitter and  receiver, respectively. Also, $N_t$ and $N_r$ represent the number of transmit and receive antennas, respectively. The channel matrix $\bold{H} = \big[ h_{ij} \big]$ is an \begin{math}N_r\times N_t\end{math}  matrix representing the line of sight  mmWave MIMO channel with properties defined in $\cite{rapaport}$. 

\begin{figure}[h]
\begin{center}
\setlength{\unitlength}{3947sp}%
\begingroup\makeatletter\ifx\SetFigFont\undefined%
\gdef\SetFigFont#1#2#3#4#5{%
  \reset@font\fontsize{#1}{#2pt}%
  \fontfamily{#3}\fontseries{#4}\fontshape{#5}%
  \selectfont}%
\fi\endgroup%
\begin{picture}(3589,1349)(136,-861)
\put(3151,-511){\makebox(0,0)[lb]{\smash{{\SetFigFont{8}{9.6}{\familydefault}{\mddefault}{\updefault}{\color[rgb]{0,0,0}$\bold{r}$}%
}}}}
\thinlines
{\color[rgb]{0,0,0}\put(345, 75){\framebox(730,389){}}
}%
{\color[rgb]{0,0,0}\put(2341, 75){\framebox(729,389){}}
}%
{\color[rgb]{0,0,0}\put(2341,-849){\framebox(729,389){}}
}%
{\color[rgb]{0,0,0}\put(1367,-849){\framebox(730,389){}}
}%
{\color[rgb]{0,0,0}\put(345,-849){\framebox(730,389){}}
}%
\thicklines
{\color[rgb]{0,0,0}\put(1075,270){\vector( 1, 0){292}}
}%
{\color[rgb]{0,0,0}\put(2097,270){\vector( 1, 0){244}}
}%
{\color[rgb]{0,0,0}\put(3070,270){\vector( 1, 0){291}}
}%
{\color[rgb]{0,0,0}\put(3411,-120){\vector( 0, 1){341}}
}%
{\color[rgb]{0,0,0}\put(3507,270){\line( 1, 0){196}}
\put(3703,270){\line( 0,-1){925}}
\put(3703,-655){\vector(-1, 0){633}}
}%
{\color[rgb]{0,0,0}\put(2341,-655){\vector(-1, 0){244}}
}%
{\color[rgb]{0,0,0}\put(1367,-655){\vector(-1, 0){292}}
}%
{\color[rgb]{0,0,0}\put(199,270){\vector( 1, 0){146}}
}%
{\color[rgb]{0,0,0}\put(345,-655){\vector(-1, 0){146}}
}%
\thinlines
{\color[rgb]{0,0,0}\put(1367, 75){\framebox(730,389){}}
}%
\put(2551,-661){\makebox(0,0)[lb]{\smash{{\SetFigFont{8}{9.6}{\familydefault}{\mddefault}{\updefault}{\color[rgb]{0,0,0}$\bold{W}_A^H$}%
}}}}
\put(1501,-661){\makebox(0,0)[lb]{\smash{{\SetFigFont{8}{9.6}{\familydefault}{\mddefault}{\updefault}{\color[rgb]{0,0,0}$\text{Q}_{\bold{b}} \big( {\bold{z}} \big)$}%
}}}}
\put(526,-661){\makebox(0,0)[lb]{\smash{{\SetFigFont{8}{9.6}{\familydefault}{\mddefault}{\updefault}{\color[rgb]{0,0,0}$\bold{W}_D^H$}%
}}}}
\put(3376,-211){\makebox(0,0)[lb]{\smash{{\SetFigFont{8}{9.6}{\familydefault}{\mddefault}{\updefault}{\color[rgb]{0,0,0}$\bold{n}$}%
}}}}
\put(2176,389){\makebox(0,0)[lb]{\smash{{\SetFigFont{8}{9.6}{\familydefault}{\mddefault}{\updefault}{\color[rgb]{0,0,0}$\bold{\tilde{x}}$}%
}}}}
\put(151,389){\makebox(0,0)[lb]{\smash{{\SetFigFont{8}{9.6}{\familydefault}{\mddefault}{\updefault}{\color[rgb]{0,0,0}$\bold{x}$}%
}}}}
\put(2626,239){\makebox(0,0)[lb]{\smash{{\SetFigFont{8}{9.6}{\familydefault}{\mddefault}{\updefault}{\color[rgb]{0,0,0}$\bold{H}$}%
}}}}
\put(1576,239){\makebox(0,0)[lb]{\smash{{\SetFigFont{8}{9.6}{\familydefault}{\mddefault}{\updefault}{\color[rgb]{0,0,0}$\bold{F}_A$}%
}}}}
\put(601,239){\makebox(0,0)[lb]{\smash{{\SetFigFont{8}{9.6}{\familydefault}{\mddefault}{\updefault}{\color[rgb]{0,0,0}$\bold{F}_D$}%
}}}}
\put(1201,-511){\makebox(0,0)[lb]{\smash{{\SetFigFont{8}{9.6}{\familydefault}{\mddefault}{\updefault}{\color[rgb]{0,0,0}$\bold{\tilde{y}}$}%
}}}}
\put(2176,-511){\makebox(0,0)[lb]{\smash{{\SetFigFont{8}{9.6}{\familydefault}{\mddefault}{\updefault}{\color[rgb]{0,0,0}$\bold{z}$}%
}}}}
\put(151,-511){\makebox(0,0)[lb]{\smash{{\SetFigFont{8}{9.6}{\familydefault}{\mddefault}{\updefault}{\color[rgb]{0,0,0}$\bold{y}$}%
}}}}
\put(3376,239){\makebox(0,0)[lb]{\smash{{\SetFigFont{9}{10.8}{\rmdefault}{\mddefault}{\updefault}{\color[rgb]{0,0,0}+}%
}}}}
{\color[rgb]{0,0,0}\put(3411,288){\circle{136}}
}%
\end{picture}%
\caption{Signal Model}
\label{fig:Fig1new.pdf}
\end{center}
\end{figure}

The transmitted signal $\bold{\tilde{x}}$ and the received signal $\bold{r}$ are represented as 
\begin{equation}\label{eq1a}
\begin{split}
{\bold{\tilde{x}}} &= {\bold{F}_A}{\bold{F}_D}{\bold{x}}, \\
{\bold{r}} &= {\bold{H}}{\bold{\tilde{x}}}+{\bold{n}}.
\end{split}
\end{equation}

Here, ${\bold{n}}$ is an $N_r\times1$ noise vector of independent and identically distributed (i.i.d) complex Gaussian random variables such that ${\bold{n}} \sim \mathcal{CN}(\bold{0},{\sigma_n^2}{\bold{I}_{N_r}})$. The received symbol vector $\bold{r}$ is analog-combined with ${\bold{W}_A^H}$ to get ${\bold{z}} = {\bold{W}_A^H}{\bold{r}}$  and later  digitized using a variable-bit quantizer to produce ${\bold{\tilde{y}}} = \text{Q}_{\bold{b}} \big( {\bold{z}} \big) = \bold{W}_{\alpha}\big( {\bold{b}} \big){\bold{z}}+{\bold{n_q}}$. This  signal is later combined using the digital combiner ${\bold{W}_D^H}$ to produce the output signal ${\bold{y}} = {\bold{W}_D^H}{\bold{\tilde{y}}}$. The quantizer is modeled as an Additive Quantization Noise Model (AQNM) $\cite{Rangan,Uplink}$. However, when we extend this model to allocate unequal ADC bits (1-4) across $N_{s}$ RF paths, the AQNM model $\text{Q}_{\bold{b}} \big( {\bold{z}} \big)$ can be written as $\text{Q}_{\bold{b}} \big( {\bold{z}} \big) = \bold{W}_{\alpha}\big( {\bold{b}} \big){\bold{z}}+{\bold{n_q}}$ $\cite{Zakir1,VarBitAlloc}$.  Vector $\bold{n}_q$ is the additive quantization noise which is uncorrelated with $\bold{z}$ and has a Gaussian distribution: ${\bold{n}_q} \sim \mathcal{CN}(\bold{0},{\bold{D}_q^2})$ $\cite{Rangan,Uplink,VarBitAlloc}$.

%\begin{equation}\label{eq2a}
%{\bold{r}} = {\bold{H}}{\bold{\tilde{x}}}+{\bold{n}}
%\end{equation}
%\begin{equation}\label{eq3a}
%{\bold{z}} = {\bold{W}_A^H}{\bold{r}}
%\end{equation}
%\begin{equation}\label{eq4a}
%{\bold{\tilde{y}}} = \text{Q}_{\bold{b}} \big( {\bold{z}} \big) = \bold{W}_{\alpha}\big( {\bold{b}} \big){\bold{z}}+{\bold{n_q}}
%\end{equation}
%\begin{equation}\label{eq5a}
%{\bold{y}} = {\bold{W}_D^H}{\bold{\tilde{y}}}
%\end{equation}

Hence, the relationship between the transmitted signal vector $\bold{x}$ and the received symbol vector $\bold{y}$ at the receiver is given by
\begin{equation}\label{eq5a}
\begin{split}
{\bold{y}} &= {\bold{W}_D^H}{\bold{W}_{\alpha}}{\big( {\bold{b}} \big)}{\bold{W}_A^H}{\bold{H}}{\bold{F}_A}{\bold{F}_D}{\bold{x}} + {\bold{W}_D^H}{\bold{W}_{\alpha}\big( {\bold{b}} \big)}{\bold{W}_A^H}{\bold{n}} \\
&+{\bold{W}_D^H}{\bold{n_q}}.
\end{split}
\end{equation}

The  dimensions of matrices indicated in Figure $\ref{fig:Fig1new.pdf}$ are as follows:
${\bold{F}_D} \in \mathbb{C}^{N_{rt} \times N_s}$, ${\bold{F}_A} \in \mathbb{C}^{N_t \times N_{rt}}$, ${\bold{H}} \in \mathbb{C}^{N_r \times N_t}$, ${\bold{W}_A^H} \in \mathbb{C}^{N_{rs} \times N_r}$, ${\bold{W}_D^H} \in \mathbb{C}^{N_s \times N_{rs}}$ and $\bold{W}_{\alpha}\big( {\bold{b}} \big) \in \mathbb{R}^{N_{rs} \times N_{rs}}$.\\

We have $\bold{W}_{\alpha}\big( {\bold{b}} \big)$ as the diagonal bit allocation matrix. We intend to design the precoders  ${\bold{F}_D}$ and ${\bold{F}_A}$, and Combiners ${\bold{W}_D^H}$ and ${\bold{W}_A^H}$, along with the ADC bit allocation ${\bold{W}_{\alpha}\big( {\bold{b}} \big)}$ for a given channel realization $\bold{H}$. We assume that we have perfect CSI at the transmitter. We further assume that the number of RF paths $N_{rs}$ on the receiver is the same as the number of parallel data streams $N_s$, i.e., $N_{rs} = N_s$. The analysis is easy to extend and similar for the case $N_{rs} \ne N_s$.
\subsection{Precoder Design}
The hybrid precoding and combing techniques for systems employing phase shifters in mmWave transceiver architectures impose some constraints on them. The constraint is to  have  unit norm entries in the analog precoders and combiners, that is $\bold{F}_A$ and ${\bold{W}_A^H}$, respectively. Given the number of constraints we have on the designing these matrices; solving them is quite complicated. Hence, we propose to design the precoder and combiner separately. For the precoder design, we employ the technique used in $\cite{SigProc,PreDsgn}$. Let the SVD of the channel matrix $\bold{H}$ be
\begin{equation}\label{eq7a}
\begin{gathered}
\bold{H} = \bold{U}\bold{\Sigma}\bold{F}_{\text{opt}}^H,\\
{\text{where }} {\bold{U}} \in \mathbb{C}^{N_r \times N_s}, {\bold{\Sigma}} \in \mathbb{R}^{N_s \times N_s}, {\bold{F}_{\text{opt}}} \in \mathbb{C}^{N_t \times N_s}.
\end{gathered}
\end{equation}

The hybrid precoders are derived upon solving the optimization problem $\cite{PreDsgn}$ stated below. 
\begin{equation}\label{eq8a}
\begin{aligned}
({\bold{F}_A^{opt}},{\bold{F}_D^{opt}}) = & \underbrace{\text{argmin}}_{{\bold{F}_D},{\bold{F}_A}}{\lVert {{\bold{F}_{\text{opt}}} - {{\bold{F}_A}{\bold{F}_D}}} \rVert }_F, \\
\text{such that } & {\bold{F_A}}\in{\mathcal{F}_{RF}}, {\lVert {{\bold{F}_D}{\bold{F}_A}} \rVert }_F^2 = N_s.
\end{aligned}
\end{equation}

The set $\mathcal{F}_{RF}$ is the set of all possible analog precoders that correspond to hybrid precoder architecture based on phase shifters. This includes all possible $N_t \times N_{rt}$ matrices with constant magnitude entries.

\subsection{Combiner and Bit-allocation Design}
We now design the combiners such that the MSE of the received, quantized and combined vector is minimized. Having designed the precoders in the previous section such that ${\bold{F}_{\text{opt}}} \approx {\bold{F}_A}{\bold{F_D}}$ with the constraints in ($\ref{eq8a}$), we can rewrite ($\ref{eq5a}$) as 
\begin{equation}\label{eq9a}
\begin{aligned}
\begin{split}
{\bold{y}} &= {\bold{W}_D^H}{\bold{W}_{\alpha}}{\big( {\bold{b}} \big)}{\bold{W}_A^H}{\bold{U}}{\bold{\Sigma}}{\bold{x}} + {\bold{W}_D^H}{\bold{W}_{\alpha}\big( {\bold{b}} \big)}{\bold{W}_A^H}{\bold{n}} \\
&+{\bold{W}_D^H}{\bold{n_q}}.
\end{split}
\end{aligned}
\end{equation}
%%%%%
Using ($\ref{eq9a}$) we derive the expression for $\mbox{MSE}$ $\delta$ as
%\begin{equation}\label{eq10a}
%%\begin{flalign*}
%\begin{split}
%MSE &= E \{ \lVert {\bold{W}_D^H}{\bold{W}_{\alpha}}{\big( {\bold{b}} \big)}{\bold{W}_A^H}{\bold{U}}{\bold{\Sigma}}{\bold{x}} - \bold{x} \\
%& + {\bold{W}_D^H}{\bold{W}_{\alpha}\big( {\bold{b}} \big)}{\bold{W}_A^H}{\bold{n}} + {\bold{W}_D^H}{\bold{n_q}} \rVert^2 \}
%\end{split}
%%\end{flalign*}
%\end{equation}
\begin{equation}\label{10aa}
\begin{gathered}
\delta \triangleq \tr{(E\big[ (\bold{y}-\bold{x})^2\big])}\\
\mbox{MSE}(\bold{x}) = E\big[ (\bold{y}-\bold{x})^2\big]\\
\mbox{MSE}(\bold{x}) = p({\bold{K}}-{\bold{I}_{N_s}})^2 + {\sigma_{n}^2}{\bold{G}\bold{G}^H} + {\bold{W}_D^H}{\bold{D}_q^2}{\bold{W}_D},
\end{gathered}
\end{equation}
where
%\begin{center}
%\begin{equation}\label{10ab}
%\begin{split}
${\bold{K}} = {\bold{W}_D^H}{\bold{W}_{\alpha}}{\bold{W}_A^H}{\bold{U}}{\bold{\Sigma}}, %\\
E[{\bold{x}}{\bold{x}}^H] = p{\bold{I}_{N_s}}, 
{\bold{G}} = {\bold{W}_D^H}{\bold{W}_{\alpha}}{\bold{W}_A^H}, %\\
%E[{\bold{x}}{\bold{x}}^H] = p{\bold{I}_{N_s}}$, \\
E[{\bold{n}}{\bold{n}}^H] = {\sigma_n^2}{\bold{I}_{N_r}}, 
E[{\bold{n_q}}{\bold{n_q}}^H] = {\bold{D}_q^2}, 
{\bold{D}_q^2} = {\bold{W}_{\alpha}}{\bold{W}_{1-\alpha}}{\text{diag}}[ {\bold{W}_A^H}{\bold{H}}({\bold{W}_A^H}{\bold{H}})^H+{\bold{I}_{N_{rs}}} 
],
E[{\bold{n}}{\bold{n_q}}^H] = 0.$\\
%\end{split}
%\end{equation}
%\end{center}

\indent
We intend to design the combiners $\bold{W}_A^H$ and $\bold{W}_D^H$ and the bit allocation matrix ${\bold{W}_{\alpha}}{\big( {\bold{b}} \big)}$ such that the MSE in ($\ref{10aa}$) is minimized. Thus we set ${\bold{K}} = {\bold{I}_{N_s}}$, which gives ${\bold{G}\bold{G}^H} = {\bold{\Sigma}^{-2}}$. Hence, the $\mbox{MSE}(\bold{x})$ in ($\ref{10aa}$) is reduced to:
\begin{equation}\label{10ac}
\mbox{MSE}(\bold{x}) = {\sigma_n^2}{\bold{\Sigma}^{-2}} + {\bold{W}_D^H}{\bold{D}_q^2}{\bold{W}_D}.
\end{equation}

The first term of $\mbox{MSE}(\bold{x})$ in ($\ref{10ac}$) is channel-dependent, and the only design parameter we have in our control to minimize $\mbox{MSE}(\bold{x})$ is to ensure that the second term ${\bold{W}_D^H}{\bold{D}_q^2}{\bold{W}_D}$ vanishes. Thus, the combiners and bit allocation matrix needs to be designed with the condition given as:
\begin{equation}\label{eq11a}
\begin{aligned}
\begin{split}
{\bold{K}} = {\bold{W}_D^H}{\bold{W}_{\alpha}}{\big( {\bold{b}} \big)}{\bold{W}_A^H}{\bold{U}}{\bold{\Sigma}} &= \bold{I}_{N_s},\\
\text{such that }{\bold{W}_D^H}{\bold{D}_q^2}{\bold{W}_D} &= \bold{0}.
\end{split}
\end{aligned}
\end{equation}

This is a hard problem to solve for $\bold{W}_A^H$, $\bold{W}_D^H$, and ${\bold{W}_{\alpha}}{\big( {\bold{b}} \big)}$. Thus, we take a slightly different approach. We will prove  (theorem \ref{Thm1}) that ($\ref{10ac}$) is indeed the Minimum MSE (MMSE) that can be achieved for a given $\bold{W}_A^H$, $\bold{W}_D^H$, ${\bold{W}_{\alpha}}{\big( {\bold{b}} \big)}$, and channel $\bold{H}$. This is accomplished by deriving the expression for the CRLB for estimating $\bold{x}$, given the observation $\bold{y}$. Secondly, we view the CRLB as a function of design parameters $\bold{W}_A^H$, $\bold{W}_D^H$, and ${\bold{W}_{\alpha}}{\big( {\bold{b}} \big)}$, and minimize the CRLB to design them \cite{MinCRLB}.
\newtheorem{theorem}{Theorem}
\begin{theorem}\label{Thm1}
If $\mbox{MSE}(\bold{x})$ is the Mean Square Error matrix as defined in ($\ref{10ac}$), and if ${\bold{\hat{x}}}$ is the estimate of $\bold{x}$ given the observation $\bold{y}$ in ($\ref{eq9a}$) for a given fixed $\bold{H}$, $\bold{W}_A^H$, $\bold{W}_D^H$, ${\bold{W}_{\alpha}}{\big( {\bold{b}} \big)}$, then there exists an estimator that is efficient. That is, $\mbox{MSE}(\bold{x})$ achieves the CRLB ${\bold{I}^{-1}({\bold{\hat{x}}})}$ under similar conditions, in other words $\mbox{MSE}(\bold{x})$ is indeed MMSE.
\end{theorem}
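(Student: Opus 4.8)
The plan is to recognize that, for fixed $\bold{H}$, $\bold{W}_A^H$, $\bold{W}_D^H$, and $\bold{W}_{\alpha}(\bold{b})$, the observation model (\ref{eq9a}) is a \emph{linear Gaussian model} in the unknown deterministic parameter $\bold{x}$, and then to invoke the classical result that for such a model the weighted least-squares (equivalently maximum-likelihood) estimator is efficient and attains the CRLB with equality.

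First I would absorb the two noise contributions in (\ref{eq9a}) into a single additive term $\bold{v} = \bold{G}\bold{n} + \bold{W}_D^H \bold{n_q}$, so that $\bold{y} = \bold{K}\bold{x} + \bold{v}$. Because $\bold{n}$ and $\bold{n_q}$ are independent zero-mean complex Gaussians with $E[\bold{n}\bold{n_q}^H] = \bold{0}$, the vector $\bold{v}$ is zero-mean complex Gaussian with covariance $\bold{C}_v = \sigma_n^2 \bold{G}\bold{G}^H + \bold{W}_D^H \bold{D}_q^2 \bold{W}_D$; this identifies the likelihood as $p(\bold{y};\bold{x}) = \mathcal{CN}(\bold{K}\bold{x},\bold{C}_v)$, with $\bold{K}$ and $\bold{G}$ exactly as defined below (\ref{10aa}).

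Next I would form the log-likelihood and differentiate it with respect to $\bold{x}$ (using the Wirtinger derivative $\partial/\partial\bold{x}^*$), which yields the score $\bold{K}^H \bold{C}_v^{-1}(\bold{y}-\bold{K}\bold{x})$ and the Fisher information matrix $\bold{I}(\bold{x}) = \bold{K}^H \bold{C}_v^{-1}\bold{K}$. The decisive step is the \emph{Fisher factorization criterion}: an efficient estimator exists, and the CRLB is met with equality, precisely when the score factors as $\bold{I}(\bold{x})(\bold{\hat{x}}-\bold{x})$ for some statistic $\bold{\hat{x}}$ independent of $\bold{x}$. I would display this factorization explicitly with $\bold{\hat{x}} = (\bold{K}^H \bold{C}_v^{-1}\bold{K})^{-1}\bold{K}^H \bold{C}_v^{-1}\bold{y}$, thereby identifying the efficient estimator and showing that its error covariance equals $\bold{I}^{-1}(\bold{\hat{x}}) = (\bold{K}^H \bold{C}_v^{-1}\bold{K})^{-1}$.

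Finally I would specialize to the design condition $\bold{K}=\bold{I}_{N_s}$ of (\ref{eq11a}), which forces $\bold{G}\bold{G}^H = \bold{\Sigma}^{-2}$ and collapses the CRLB to $\bold{I}^{-1}(\bold{\hat{x}}) = \bold{C}_v = \sigma_n^2 \bold{\Sigma}^{-2} + \bold{W}_D^H \bold{D}_q^2 \bold{W}_D$, which is exactly $\mbox{MSE}(\bold{x})$ in (\ref{10ac}); hence the MSE attains the CRLB and is indeed the MMSE. The step I expect to be the main obstacle is the correct treatment of the \emph{complex} parameter: the Fisher information must be assembled from the augmented real/imaginary (or Wirtinger) representation rather than naively, and I must lean on the AQNM Gaussian assumption for $\bold{n_q}$ to legitimize treating $\bold{v}$ as genuinely Gaussian, since the equality-in-CRLB conclusion relies on true Gaussianity and not merely on matched second-order statistics.
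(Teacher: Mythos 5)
Your proposal is correct and follows essentially the same route as the paper: both reduce (\ref{eq9a}) to the linear Gaussian model $\bold{y}=\bold{K}\bold{x}+\bold{n_1}$ with noise covariance $\sigma_n^2\bold{G}\bold{G}^H+\bold{W}_D^H\bold{D}_q^2\bold{W}_D$, obtain the CRLB $(\bold{K}^H\bold{C}^{-1}\bold{K})^{-1}$ for the linear model, and specialize at $\bold{K}=\bold{I}_{N_s}$ to recover (\ref{10ac}). The only difference is one of detail: you explicitly exhibit the efficient estimator via the score factorization and flag the complex-parameter and AQNM-Gaussianity caveats, whereas the paper delegates the efficiency claim to the standard linear-model result in Kay.
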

\begin{proof}
We look at the problem in ($\ref{eq9a}$) as an Estimation problem, such that we need to estimate $\bold{x}$ given $\bold{y}$ is observed; given that $\bold{W}_A^H$, $\bold{W}_D^H$, ${\bold{W}_{\alpha}}{\big( {\bold{b}} \big)}$ are fixed. For simplicity of notation, we will write ${\bold{W}_{\alpha}}{\big( {\bold{b}} \big)}$ as $\bold{W}_{\alpha}$. Now, we will rewrite the ($\ref{eq9a}$) as 
\begin{equation}\label{eq13a}
{\bold{y}} = {\bold{K}}{\bold{x}} + {\bold{n_1}}
\end{equation}
where ${\bold{K}} = {\bold{W}_D^H}{\bold{W}_{\alpha}}{\bold{W}_A^H}{\bold{U}}{\bold{\Sigma}}$, and 
$\bold{n_1} = {\bold{W}_D^H}{\bold{W}_{\alpha}}{\bold{W}_A^H}{\bold{n}} + {\bold{W}_D^H}{\bold{n_q}}.$
We know that $\bold{n}$ and $\bold{n_q}$ are Gaussian random vectors, with the following statistics:
\vspace{-7.3mm}
\begin{center}
\begin{equation}\label{eq16a}
\begin{split}
&\bold{n} \sim \mathcal{N}(\bold{0},{\sigma_n^2}{\bold{I}_{N_r}}),\\
&\bold{n_q} \sim \mathcal{N}(\bold{0},{\bold{D}_q^2}).
\end{split}
\end{equation}
\end{center}
The statistical distribution of $\bold{n_1}$ is given by: 
\begin{equation}\label{eq17a}
E[\bold{n_1}] = {\bold{W}_D^H}{\bold{W}_{\alpha}}{\bold{W}_A^H}E[{\bold{n}}] + {\bold{W}_D^H}E[{\bold{n_q}}] = {\bold{0}},
\end{equation}
\begin{equation}\label{eq18a}
\begin{split}
{\sigma_{n_1}^2} &= E[ (\bold{n_1} - E[\bold{n_1}])^2 ] = E[{\bold{n_1}}{\bold{n_1}}^H]\\
&= {\sigma_n^2}{\bold{G}}{\bold{G}^H} + {\bold{W}_D^H}{\bold{D}_q^2}{\bold{W}_D}.
%&= E[({\bold{W}_D^H}{\bold{W}_{\alpha}}{\bold{W}_A^H}{\bold{n}} + {\bold{W}_D^H}{\bold{n_q}})\\
%&({\bold{W}_D^H}{\bold{W}_{\alpha}}{\bold{W}_A^H}{\bold{n}} + {\bold{W}_D^H}{\bold{n_q}})^H]
%&=E[{\bold{W}_D^H}{\bold{W}_{\alpha}}{\bold{W}_A^H}{\bold{n}}{\bold{n}}^H{\bold{W}_A}{\bold{W}_{\alpha}}{\bold{W}_D}] + \\
%&E[{\bold{W}_D^H}{\bold{n_q}}{\bold{n_q}}^H{\bold{W}_D}]
\end{split}
\end{equation}
%\begin{equation}\label{eq19a}
%{\sigma_{n_1}^2} = {\sigma_n^2}{\bold{G}}{\bold{G}^H} + {\bold{W}_D^H}{\bold{D}_q^2}{\bold{W}_D}
%\end{equation}
%
%where,
%\begin{equation}\label{eq20a}
%{\bold{G}} = {\bold{W}_D^H}{\bold{W}_{\alpha}}{\bold{W}_A^H}.
%\end{equation}
%
Thus, the statistics of $\bold{n_1}$ follows: 
\begin{equation}\label{eq22a}
\bold{n_1} \sim \mathcal{N}(\bold{0},({{\sigma_n^2}{\bold{G}}{\bold{G}^H} + {\bold{W}_D^H}{\bold{D}_q^2}{\bold{W}_D}})).
\end{equation}

It is noted that ${\bold{G}}$ is an $N_s \times N_r$ matrix with $N_r \gg N_s$. It is safe to assume that $\bold{G}$ has a full row Rank and its pseudo inverse exists $\cite{LAStrang}$.

Equation ($\ref{eq13a}$) can be seen as a linear model, in which we intend to estimate $\bold{x}$, given the observation $\bold{y}$. We can express the conditional probability distribution of ${\bold{y}}$ given ${\bold{x}}$ as $\cite{Kay}$
\begin{equation}\label{eq23a}
p({\bold{y} \vert {\bold{x}}}) \sim \frac{1}{({2\pi}{{\sigma_{n_1}^2}})^{\frac{N_s}{2}}} \text{exp} \bigg\{ -\frac{1}{2{\sigma_{n_1}^2}} ({\bold{y}}-{\bold{K}}{\bold{x}})^H({\bold{y}}-{\bold{K}}{\bold{x}}) \bigg\}.
\end{equation}

From ($\ref{eq13a}$) and ($\ref{eq23a}$), it is straightforward to see that the ``regularity conditions" are satisfied, and hence for such a linear estimator, we can write $\cite{Kay}$ the expression for the CRLB as 
\begin{equation}\label{eq24a}
{\bold{I}^{-1}({\bold{\hat{x}}})} = ({\bold{K}^H}{\bold{C}^{-1}}{\bold{K}})^{-1},
\end{equation}
where C is the noise covariance matrix of ${\bold{n_1}}$ as given in ($\ref{eq22a}$).
%\begin{equation}\label{eq25a}
%{\bold{C}} = {{\sigma_n^2}{\bold{G}}{\bold{G}^H} + {\bold{W}_D^H}{\bold{D}_q^2}{\bold{W}_D}}
%\end{equation}
Substituting $\bold{K}$ and $\bold{C}$ in ($\ref{eq24a}$), we arrive at
\begin{equation}\label{eq26a}
\begin{split}
{\bold{I}^{-1}({\bold{\hat{x}}})} &= ({\bold{K}^H}{\bold{C}^{-1}}{\bold{K}})^{-1}\\
&={\sigma_n^2}{\bold{\Sigma}^{-2}}+{\bold{K}^{-1}}{\bold{W}_D^H}{\bold{D}_q^2}{\bold{W}_D}({\bold{K}^H})^{-1}.
\end{split}
\end{equation}

%Given that ${\bold{U}^H}$ = ${\bold{U}^{-1}}$\\
%\newline

Now, on setting ${\bold{K}} = {\bold{I}_{N_s}}$, the expression for the CRLB is simplified to 
\begin{equation}\label{eq27a}
{\bold{I}^{-1}({\bold{\hat{x}}})} = {\sigma_n^2}{\bold{\Sigma}^{-2}}+{\bold{W}_D^H}{\bold{D}_q^2}{\bold{W}_D}.\end{equation}
\end{proof}
\renewcommand{\qed}{\hfill\blacksquare}
Thus, for a given ${\bold{W}_A^H}$, ${\bold{W}_D^H}$ and $\bold{W}_\alpha$, we see that the expression for CRLB in ($\ref{eq27a}$) is the same as the $\mbox{MSE}(\bold{x})$ in ($\ref{10ac}$). Hence, $\mbox{MSE}(\bold{x})$ in ($\ref{10ac}$) is indeed MMSE. $\qed$ \\

%Thus the ($\ref{eq27a}$) for CRLB with the condition ${\bold{K}} = {\bold{I}_{N_s}}$ and for given $\bold{W}_A^H$, $\bold{W}_D^H$ and $\bold{W}_\alpha$ is similar to the MMSE ($\ref{10ac}$). Hence the MMSE in the ($\ref{10ac}$) is indeed the minimum value that could be achieved for a given ${\bold{W}_A^H}$, ${\bold{W}_D^H}$ and $\bold{W}_\alpha$.\\

\subsubsection{Minimizing the CRLB}
Given the fact that the MMSE derived using  ($\ref{10ac}$) achieves CRLB for fixed design parameters, we now intend to design the combiners $\bold{W}_A$, $\bold{W}_D$ and bit allocation ${\bold{W}_{\alpha}}$ by minimizing the CRLB \cite{MinCRLB}. Referring to ($\ref{eq26a}$), we wish to have ${\bold{I}^{-1}({\bold{\hat{x}}})}$  vanish or gets close to zero:
\begin{equation}\label{eq28a}
{\bold{I}^{-1}({\bold{\hat{x}}})} = {\sigma_n^2}{\bold{\Sigma}^{-2}}+{\bold{K}^{-1}}{\bold{W}_D^H}{\bold{D}_q^2}{\bold{W}_D}({\bold{K}^H})^{-1} \approx \bold{0}.
\end{equation}

Substituting  $\bold{K}$  into ($\ref{eq28a}$) and simplifying the equation, we arrive at 
\begin{equation}\label{eq29a}
{\sigma_n^2}{\bold{\Sigma}^{-2}}+{\bold{\Sigma}}^{-1}{\bold{U}^H}({\bold{W}_A^H})^{-1}{\bold{W}_{\alpha}^{-1}}{\bold{D}_q^2}{\bold{W}_{\alpha}^{-1}}{\bold{W}_A^{-1}}{\bold{U}}{\bold{\Sigma}}^{-1} \approx \bold{0}.       %\\
%\end{split}
\end{equation}

Phase shifters or splitters impose constraints on the design of the analog combiner $\bold{W}_A^H$ \cite{SigProc}. We will denote the constrained analog combiner as $\bold{\tilde{W}}_A^H$. The imperfections in the analog combiner are compensated by the digital combiner:
\begin{equation}\label{eq30a}
{\bold{W}_A^H} = {\bold{W}_D}{\bold{\tilde{W}}_A^H}.
\end{equation}

We also  would like to design the actual analog combiner ${\bold{\tilde{W}}_A^H}$ and the digital combiner ${\bold{W}_D}$, such that:
\begin{equation}\label{eq31a}
\begin{split}
{\bold{W}_A^H} = &{\bold{U}^H} = {\bold{W}_D}{\bold{\tilde{W}}_A^H};\\
\text{or } &{\bold{U}} = {\bold{\tilde{W}}_A}{\bold{W}_D^H}.\\
\end{split}
\end{equation}

The analog and digital combiner can now be designed more easily. We start with ${\bold{{W}}_A^H} = {\bold{U}^H}$, then we adjust the ${\bold{\tilde{W}}_A^H}$ and ${\bold{W}_D}$ to meet the necessary constraints.\\ 

To design the bit allocation, we  substitute  ($\ref{eq31a}$) in ($\ref{eq29a}$) to arrive at
\begin{equation}\label{eq32a}
{\bold{I}^{-1}({\bold{\hat{x}}})} = {\bold{\Sigma}^{-2}}\bigg[ {\sigma_n^2}{\bold{I}_{N_s}} +  {\bold{W}_{\alpha}^{-2}}{\bold{D}_q^2}\bigg] \approx \bold{0}.
%{\bold{\Sigma}}^{-2}{\bold{W}_{\alpha}^{-2}}{\bold{D}_q^2} \approx \bold{0}.
\end{equation}

In order to satisfy ($\ref{eq32a}$), the condition for the bit allocation becomes:
\begin{equation}\label{eq33a}
{\bold{\Sigma}}^{2} \gg {\sigma_n^2}{\bold{I}_{N_s}} +  {\bold{W}_{\alpha}^{-2}}{\bold{D}_q^2}.
\end{equation}
Since ${\bold{\Sigma}}^2$, ${\bold{W}_{\alpha}^2}$ and ${\bold{D}_q^2}$ are diagonal matrices, we can rewrite $(\ref{eq33a})$ as a set of $N_s$ inequalities:  
\begin{equation}\label{eq34a}
\begin{gathered}
{{\sigma_i}^2} \gg \sigma_n^2 + \frac{f(b_i)}{1-f(b_i)}({1+{\bold{w}_{A_i}^H}{\bold{h_i}^H}{\bold{h_i}}{\bold{w}_{A_i}}}),\\
\text{ for }1 \le i \le N_s,
\end{gathered}
\end{equation}
where ${\sigma_i}$ is the diagonal element of ${\bold{\Sigma}}$, ${\sigma_n^2}$ is the noise power, $f(b_i)$ is the ratio of the Mean Square Quantization Error (MQSE) and the power of the symbol for a non-uniform MMSE quantizer with $b_i$ bits along the RF path $i$, $i=1,2,..N_s$ $\cite{VarBitAlloc}$. The values for $f(b_i)$ are indicated in the Table $\ref{betaVal}$. $\bold{w}_{A_i}$ and $\bold{h}_i$ are the $i^{th}$ column of the matrix $\bold{W_A}$ and $\bold{H}^H$, respectively.
\begin{table}[http]
%\begin{center}
\begin{tabu} to 0.5\textwidth { c c c c c c }
\hline
$b_i$  & 1 & 2 & 3 & 4 & 5 \\
\hline
$f(b_i)$ & 0.3634 & 0.1175 & 0.03454 & 0.009497 & 0.002499 \\
\hline
\end{tabu}
\vspace{1mm}
\caption{Values of $f(b_i)$ for different ADC Quantization Bits $b_i$} \label{betaVal}
%\end{center}
\end{table}
\vspace{-5mm}
\subsection{Algorithm Description for bit allocation}
We need to satisfy all the $N_s$ inequalities in ($\ref{eq34a}$) to attain the optimal bit allocation. However, it would not be possible to attain optimal bit allocation, given the number of bits and the power budget. In such scenarios, we would make a best effort approach to satisfy the set of equations in ($\ref{eq34a}$) and the solution would be the best solution given the bits and power constraints.

For a given $N_s$, we first form a super set $B_{\text{set}}$ of all possible bit allocations $\cite{Zakir1}$ that satisfy the given ADC power budget $P_{\text{ADC}}$ as
\begin{equation}\label{eq35a}
\begin{split}
B_{\text{set}} \triangleq \big\{ &\bold{b}_j = {\big[ b_{j1}, b_{j2}, \dots, b_{jN}  \big]}^T \text{ for } 0 \leq j < 4^{N_s} \mid \\
& 1 \le b_{ji} \le 4 \text{ and } \sum_{i=1}^{N} cf_s2^{b_{ji}} \leq P_{\text{ADC}} \big\}.
\end{split}
\end{equation}

We incorporate a gain term ${K_{f}}(b_i)$ for a given bit $b_i$ on RF path $i$ into the set of equations in ($\ref{eq34a}$) such that
\begin{equation}\label{eq36a}
{K_{f}}(b_i) \triangleq \frac{{\sigma_i^2}}{\sigma_n^2+\frac{f(b_i)}{1-f(b_i)}({1+{\bold{w}_{A_i}^H}{\bold{h_i}^H}{\bold{h_i}}{\bold{w}_{A_i}}})}.
\end{equation}
For a given bit allocation $\bold{b}_j$ in $B_{\text{set}}$, we denote
\begin{equation}\label{eq37a}
{K_{f}}(\bold{b}_j) \triangleq \sum_{i=1}^{N_s} \Bigg[\frac{{\sigma_i^2}}{\sigma_n^2+\frac{f(b_i)}{1-f(b_i)}({1+{\bold{w}_{A_i}^H}{\bold{h_i}^H}{\bold{h_i}}{\bold{w}_{A_i}}})}\Bigg]. 
\end{equation}
We select ${\bold{b}} \in B_{\text{set}}$ to maximize ${K_{f}}(\bold{b}_j)$ and declare that as the desirable bit allocation solution. The Algorithm is described in Algorthm~$\ref{AlgoCRLB}$.
\begin{algorithm}
  \caption{CRLB-based Bit Allocation}\label{AlgoCRLB}
  \begin{algorithmic}
    \Procedure{CRLB-based Bit Allocation}{$\bold{H},\bold{W}_A,\bold{\Sigma},B_{\text{set}},N_s, f(\cdot),\sigma_n^2$}
      \State $\bold{H}\gets \text{MIMO channel}$
      \State $\bold{W}_A\gets \text{Combiners designed as per ($\ref{eq31a}$)}$
      \State $\bold{\Sigma} \gets \text{Matrix containing singular values $\sigma_i$}$
      \State $B_{\text{set}}\gets \text{Bit allocations adhering to ADC power budget}$
      \State $N_s\gets \text{Number of spatial-multiplexed paths}$
      \State $f(\cdot)\gets \text{Quantization error lookup table}$
      \State $\sigma_n^2\gets \text{AWG Noise power}$
      \For{\texttt{j=0;j++ ;until j<size of $B_{\text{set}}$}}
           \State $K_f(b_j) = 0$
           \For{\texttt{i=0;i++ ;until i<$N_s$}}
                 \State \footnotesize{$K_f(b_j)\gets K_f(b_j) + \frac{{\sigma_i^2}}{\sigma_n^2+\frac{f(b_i)}{1-f(b_i)}({1+{\bold{w}_{A_i}^H}{\bold{h_i}^H}{\bold{h_i}}{\bold{w}_{A_i}}})}$}
                 \normalsize
           \EndFor
     \EndFor
     \State $index\gets \text{max}(K_f)$
     \State $\bold{b}\gets B_{\text{set}}\text{ at } index$
     \State \textbf{return} $\bold{b}$ \Comment{Optimal Bit Allocation Vector}
    \EndProcedure
  \end{algorithmic}
\end{algorithm}

\section{Test Setup and Simulation Results}
We simulate the mmWave channel using the NYUSIM channel simulator with the configurations specified in Table $\ref{nyusimtab}$ $\cite{nyusim}$. We strengthen the singular value on the dominant channel to simulate a strong scatterer $\cite{SVDcorrChannels}$. With this simulated channel, we use $N_s=8$ or $N_s=12$ strong channels (RF paths) to spatially multiplex 64 QAM data symbols.\\
\indent
With the above channel, we run the FS technique to arrive at the optimal bit allocation and compute the MSE $\delta$. We do this at various SNRs. This simulation is shown in pink for $N_s = 8$ and $N_s = 12$ in Figure $\ref{fig:crlb_Nr8_H64by32.eps}$ and Figure $\ref{fig:crlb_Nr12_H64by32.eps}$, respectively. Similarly, we run the simulation to arrive at the bit allocations using our proposed approach. The simulation of MSE $\delta$ at various SNRs are plotted in black. In both situations, we set the  ADC power budget to be the power consumed on having 2-bit ADCs on all RF paths. We constraint the ADC bit resolution between 1 and 4 bits. Figures $\ref{fig:crlb_Nr8_H64by32.eps}$ and $\ref{fig:crlb_Nr12_H64by32.eps}$ also has simulations of MSE performance at various SNRs on having all 2-bit ADCs (red plots), all 1-bit ADCs (blue plots), and no quantization (green plots).
\begin{figure}[t!]
\centering
%\parbox{5cm}{
\includegraphics[width=0.5\textwidth]{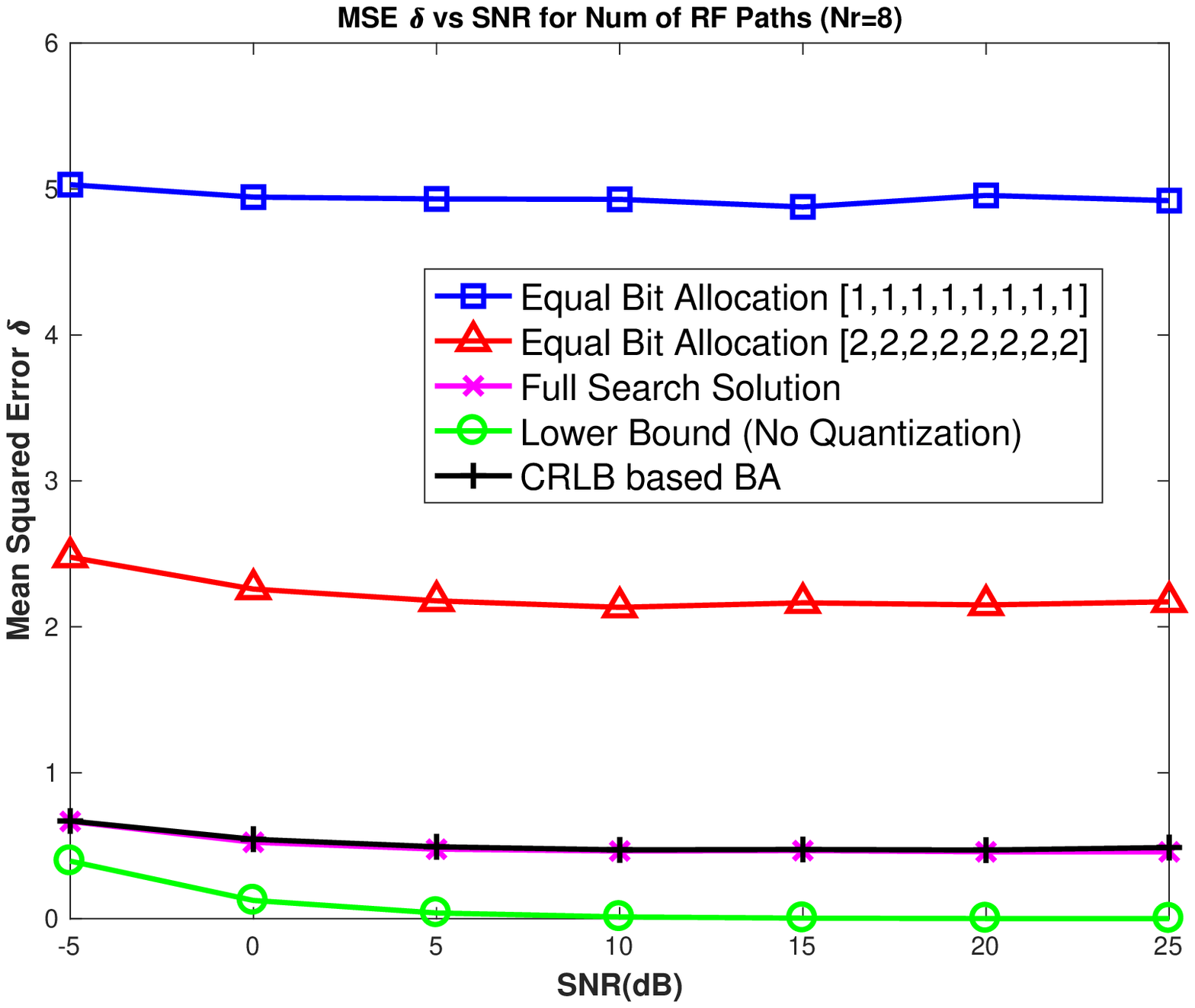}
\caption{MSE $\delta$ vs. SNR for $N_s=8$ for all 1-bit, 2-bit, FS and CRLB-based simulation runs}
\label{fig:crlb_Nr8_H64by32.eps}
\qquad
%\begin{minipage}{5cm}
\includegraphics[width=0.5\textwidth]{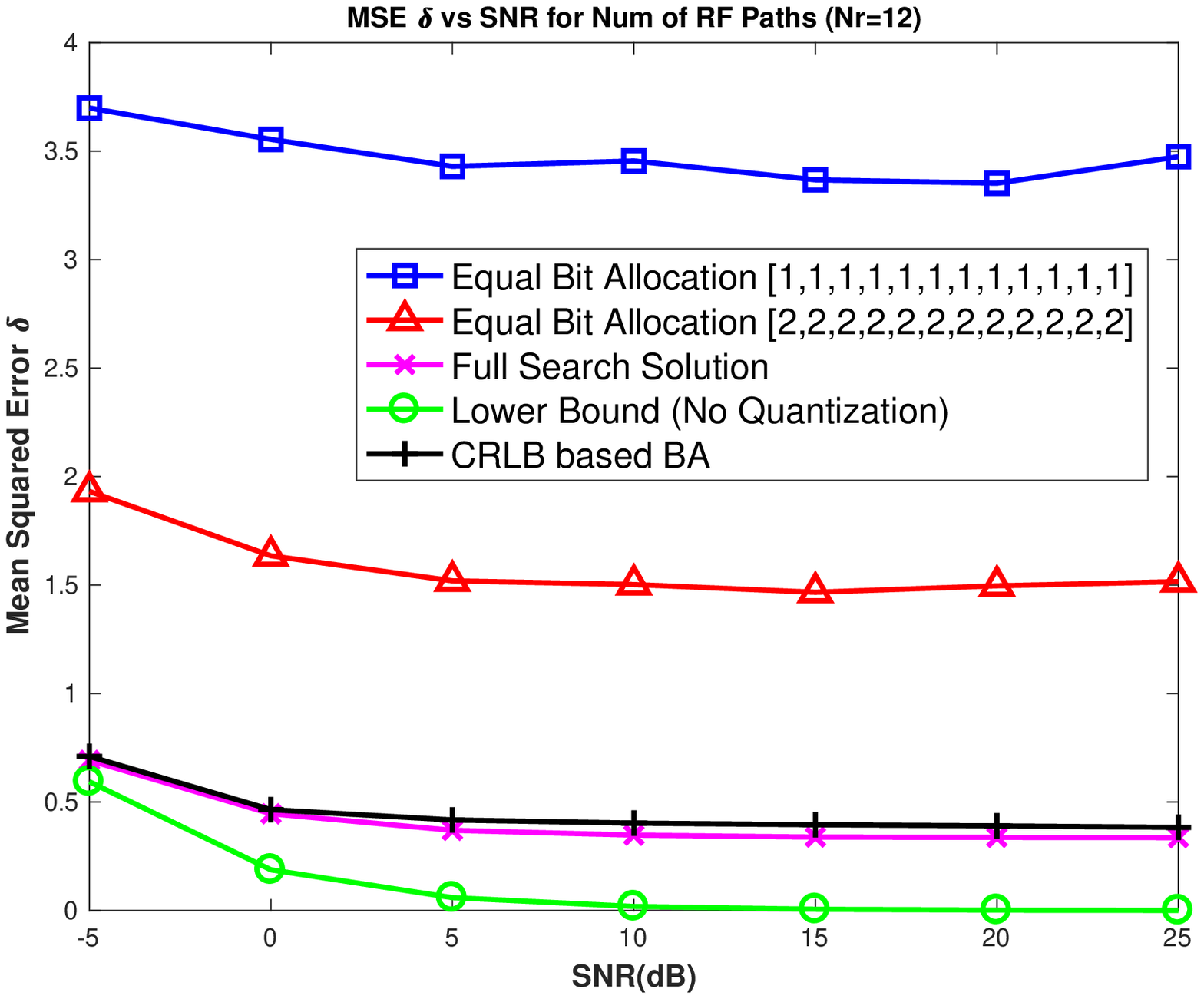}
\caption{MSE $\delta$ vs. SNR for $N_s=12$ for all 1-bit, 2-bit, FS and CRLB-based simulation runs}
\label{fig:crlb_Nr12_H64by32.eps}
%\end{minipage}
\end{figure}
\begin{table}
\begin{center}
\begin{tabu} to 0.5\textwidth {| l| l| }
 \hline
 \textbf{Parameters}  & \textbf{Value/Type} \\
 \hline
Frequency & 28Ghz \\
\hline
Environment & Line of sight \\
\hline
T-R seperation & 100m\\
\hline
TX/RX array type & ULA\\
\hline 
Num of TX/RX elements $N_t$/$N_r$ & 32/64\\
\hline
TX/RX  antenna spacing & $\lambda/2$\\
\hline
\end{tabu}
\vspace{1mm}
\caption{$\text{Channel parameters for NYUSIM model $\cite{nyusim}$}$} \label{nyusimtab}
\vspace{-3mm}
\end{center}
\end{table}
%\begin{figure}[h]
%\centering
%%\parbox{5cm}{
%\includegraphics[width=0.5\textwidth]{crlb_Nr8_H64_32.eps}
%\caption{MSE $\delta$ vs. SNR for $N_s=8$ for all 1-bit, 2-bit, FS and CRLB-based simulation runs}
%\label{fig:crlb_Nr8_H64by32.eps}
%\qquad
%%\begin{minipage}{5cm}
%\includegraphics[width=0.5\textwidth]{crlb_Nr12_H64_32.eps}
%\caption{MSE $\delta$ vs. SNR for $N_s=12$ for all 1-bit, 2-bit, FS and CRLB-based simulation runs}
%\label{fig:crlb_Nr12_H64by32.eps}
%%\end{minipage}
%\end{figure}
%\clearpage
Table \ref{tab:CRLBTab} and Table \ref{tab:CRLBTab1} summarizes the computational complexities of FS, GA and our proposed algorithm. Table \ref{tab:CRLBTab} captures the number of evaluations of MSE $\delta$ in ($\ref{10aa}$) required for FS Algorithm and GA, and the number of evaluations of gains ${K_{f}}(\bold{b}_j)$ in ($\ref{eq37a}$) for CRLB-based algorithm to arrive at the solution.\\
\indent
The evaluation of MSE $\delta$ requires computation of $\bold{y}$ in ($\ref{eq9a}$) which requires multiple matrix multiplications with signal vector of order $N_s$ in the least. Hence, computational complexity of MSE $\delta$ is $O(N_s^2)$. However, the computation of ${K_{f}}(\bold{b}_j)$ in ($\ref{eq37a}$) requires 2 dot product evaluations of size $N_t$ and $N_r$, $N_s$ times. Hence its computational complexity is $O(N_s)$. Thus, the CRLB-based bit allocation brings down the computational complexity compared to the FS algorithm and GA by an order of magnitude.
%------------------------------------------------- BEGIN TABLE (III and IV)-----------------------------------------------
%\FloatBarrier
\begin{table}
\begin{center}
%--------------------------------------------------- TABLE - III -----------------------------------------------------------------
\begin{tabu} to 0.45\textwidth { | X[c] | X[c] | X[c] | X[c] |}
\hline
\small Number of RF paths  & \small FS algorithm & \small GA & \small CRLB-based algorithm \\
\hline
  & \small Num. of MSE evaluation $\gamma$. $O(N_s^2)$  & \small Num. of MSE evaluation $\gamma$. $O(N_s^2)$ & \small Num. of  ${K_{f}}(\bold{b}_j)$ evaluations $\mu$. $O(N_s)$\\
\hline
% 4  & 31 & 9  \\
%\hline
\small8 & \small1878 & \small324 & \small1878 \\
\hline
\small12 & \small133253 & \small2025 & \small133253 \\
\hline
\end{tabu}
\vspace{2mm}
\caption{Number of MSE $\delta$ evaluations of FS, GA and number of $K_f$ evaluations of CRLB-based algorithm} \label{tab:CRLBTab}
%\end{center}
%\end{table}
%--------------------------------------------------- TABLE - IV -----------------------------------------------------------------
\begin{tabu} to 0.47\textwidth { | p{0.75cm} | p{0.95cm} | X[c] | p{0.6cm} | p{0.95cm} | X[c] | p{0.8cm} |}
%\parnoteclear
\hline
\multirow{3}{1.0cm}{\footnotesize Number of RF paths} & \multicolumn{3}{c|}{\footnotesize Number of complex} & %
    \multicolumn{3}{c|}{\footnotesize Number of complex}\\
%\cline{2-7}
% &  & Value1 & & &  Value2 & \\
& \multicolumn{3}{c|}{\footnotesize multiplications$^{\S}$} & \multicolumn{3}{c|}{\footnotesize additions$^{\S}$}  \\
\cline{2-7}
& \centering \footnotesize FS & \footnotesize GA & \centering \footnotesize CRLB-based & \centering \footnotesize FS & \footnotesize GA & \centering \footnotesize CRLB-based\\
\hline
\centering \multirow{2}{*}{\small 8}  & \scriptsize1622592 & \scriptsize 279936  & \scriptsize \textcolor{red}{864} & \scriptsize1592544 & \scriptsize 274752 & {\scriptsize \textcolor{red}{760}} \\
& & & & & & {\scriptsize \textcolor{red}{13146$^{\dagger}$}}\\
\hline
\centering \multirow{2}{*}{\small 12} & \scriptsize179092032 &\scriptsize2721600 & \scriptsize \textcolor{red}{1296} & \scriptsize175893960 & \scriptsize 2673000 & \scriptsize \textcolor{red}{1140} \\
& & & & & & {\scriptsize \textcolor{red}{1465783$^{\dagger}$}}\\
\hline
\end{tabu}

\scriptsize{$^{\S}$ See Appendix below for details. $^{\dagger}$ Real additions} \\
%\vspace{0.5mm}
\caption{Computational complexity in terms of total number of multiplications and additions} \label{tab:CRLBTab1}
%---------------------------------------------------------------------------------------------------------------------------------
\end{center}
\end{table}
%\FloatBarrier
%--------------------------------------------------- END TABLE (III and IV)-----------------------------------------------
\section{Conclusion}
In this paper, we derive an  ADC bit allocation algorithm based on  channel SVD computation. This is derived based on the minimization of the CRLB expression seen to be a function of the hybrid combiners and bit allocation matrices. By simulation of the mmWave channel based on $\cite{nyusim}$, we observe  that the MSE performance of the receiver based on the proposed bit allocation algorithm matches very closely with that of the FS method. Comparing the computational complexity of the proposed bit allocation method with FS and GA methods, it is observed that the proposed CRLB-based technique has significant computational advantage with an MSE performance matching the FS technique.
% use section* for acknowledgment
\section*{Acknowledgment}
The authors would like to thank National Instruments for the support extended to this work.  

% trigger a \newpage just before the given reference
% number - used to balance the columns on the last page
% adjust value as needed - may need to be readjusted if
% the document is modified later
%\IEEEtriggeratref{8}
% The "triggered" command can be changed if desired:
%\IEEEtriggercmd{\enlargethispage{-5in}}

% references section

% can use a bibliography generated by BibTeX as a .bbl file
% BibTeX documentation can be easily obtained at:
% http://mirror.ctan.org/biblio/bibtex/contrib/doc/
% The IEEEtran BibTeX style support page is at:
% http://www.michaelshell.org/tex/ieeetran/bibtex/
%\bibliographystyle{IEEEtran}
% argument is your BibTeX string definitions and bibliography database(s)
%\bibliography{IEEEabrv,../bib/paper}
%
% <OR> manually copy in the resultant .bbl file
% set second argument of \begin to the number of references
% (used to reserve space for the reference number labels box)
\bibliographystyle{IEEEtran}
\bibliography{Pap2BibTexFile}
\appendix
%\subsection{First Appendix}
\label{FirstAppendix}
It can be shown that the total number of complex multiplications and additions required by FS and GA are $\gamma \big(N_s^2 + 4N_s + N_s(N_t + N_r) \big)$ and $\gamma \big(N_s^2 + 2N_s + N_s(N_t + N_r) \big)$ respectively. Also, $\gamma$ is the number of MSE $\delta$ evaluations.\\
\indent
Similarly, it can be shown that the total number of complex multiplications and additions required by our proposed approach are $N_s(3N_b + N_t + N_r)$ and $N_s(N_t + N_r - 1) + \mu \big(N_s-1 \big)$ respectively. Here, $N_b$ is the number of ADC bits resolution range and $\mu$ is the number of evaluations of ${K_{f}}(\bold{b}_j)$. 
%If $\bold{z}$ is the analog beam combined signal which is digitized using a high resolution ADC. The MSE to can be evaluated for different ADC bit configuration as $\delta =  \tr{(E\big[ ({\bold{W}_D^H}(\bold{W}_{\alpha}\big( {\bold{b}} \big){\bold{z}}+{\bold{n_q}})-\bold{x})^2\big])}$. Using this expression and noting that ${\bold{n_q}} = ({\bold{D}_q^2})$, it can be shown that number of multiplications and additions required per MSE $\delta$ evaluation is $N_s^2 + 2N_s + N_s(N_t + N_r + 2)$, and $N_s^2 + 2N_s + N_s((N_t-1) + (N_r-1) + 1)$ respectively. Hence the total number of multiplications and additions required for FS and GA is $\gamma \Bigg(N_s^2 + 2N_s + N_s(N_t + N_r + 2) \Bigg)$ and $\gamma \Bigg(N_s^2 + 2N_s + N_s((N_t-1) + (N_r-1) + 1) \Bigg)$ respectively.

% that's all folks
\end{document}